\setlist[itemize]{itemsep=0.5ex, topsep=0pt}
\setlist[description]{itemsep=0.0ex, topsep=0pt}
\theoremstyle{definition}\newtheorem{dfn}{Definition}[section]
\theoremstyle{plain}\newtheorem{obs}{Observation}[section]
\theoremstyle{plain}\newtheorem{thm}{Theorem}[section]
\theoremstyle{plain}\newtheorem{lem}{Lemma}[section]
\theoremstyle{plain}\setcounter{cas}{-1}
\theoremstyle{plain}\newtheorem*{bst}{Bounded States Theorem}
\newcommand\refbst{\hyperref[bst]{Bounded States Theorem}}
\newcommand\state[1]{\textsf{\footnotesize\upshape{}#1}}
\newcolumntype{C}{>{$}c<{$}}
\newcolumntype{L}{>{$}l<{$}}
\newcolumntype{R}{>{$}r<{$}}
\newcommand\dMP{\ensuremath{d_\mathrm{MP}}}
\newcommand\chr{{\raisebox{0.5\depth}{$\chi$}}}
\newcommand\ext{{\overline\chr}}
\newcommand\exx{{\widehat\chr}}
\author{Olivier Boes\thanks{Department of Knowledge Engineering, Maastricht University, P.O. Box 616, 6200 MD Maastricht, Netherlands.}, Mareike Fischer\thanks{Institut f\"ur Mathematik und Informatik,
Walther-Rathenau-Stra{\ss}e 47,
17487 Greifswald, Germany.}, Steven Kelk\thanks{Department of Knowledge Engineering, Maastricht University, P.O. Box 616 6200 MD, Maastricht, Netherlands.}}
\title{A linear bound on the number of states in optimal convex characters for
maximum parsimony distance}
\date{}
\begin{document}
\maketitle
\begin{abstract}
\noindent
Given two phylogenetic trees on the same set of taxa $X$, the maximum parsimony
distance $\dMP$ is defined as the maximum, ranging over all characters $\chr$ on $X$,
of the absolute difference in parsimony score induced by $\chr$ on the two
trees. In this note we prove that for binary trees there exists a character achieving this maximum
that is convex on one of the trees (i.e. the parsimony score induced
on that tree is equal to the number of states in the character minus 1) and
such that the number of states in the character is at most $7\dMP-5$. This is the first non-trivial
bound on the number of states required by optimal characters, convex or otherwise. The result potentially
has algorithmic significance because, unlike general characters,
convex characters with a bounded number of states can be enumerated in polynomial time.
\end{abstract}



\section{Introduction}
When phylogenetic trees are inferred from different genes or with different methods, the outcome are often topologically distinct trees, even when the underlying set of species is the same \cite{HusonRuppScornavacca10}. It is natural to ask how different these trees really are, which is why different metrics on phylogenetic trees have been suggested \cite{Kuhner2015}. To name just a few, there is for example the Robinson-Foulds distance \cite{RobinsonFoulds}, as well as  tree rearrangement metrics like the SPR distance or the TBR distance \cite{allen01}. Recently, another metric has been proposed: \emph{maximum parsimony distance} $\dMP$ \cite{dMP-fischer,dMP-moulton}, which is
a lower bound on TBR distance (and thus also SBR distance). Informally this metric consists of finding a character with a low parsimony score on one of the trees and a high parsimony score on the other i.e. it seeks a character that, from a parsimony perspective, distinguishes the most between the two trees. Although the
metric is based on the parsimony score of a tree, which can be computed in polynomial time using e.g. Fitch's algorithm \cite{Fitch}, the metric itself is (like SPR and TBR distance) NP-hard to compute, even on  binary trees \cite{dMP-fischer,dMP-kelk}. The metric also seems extremely difficult
to compute in practice, with exact algorithms based on Integer Linear Programming (ILP) currently limited to trees with 15-20 leaves \cite{dMP-kelk}.  

In \cite{dMP-fischer,dMP-moulton} it has been shown that, with a view towards developing more efficient exponential-time algorithms, the search for optimal characters can be restricted to characters which are convex (equivalently, homoplasy-free \cite{dMP-moulton}) on one of the two trees under investigation i.e. the parsimony score on that tree is the number of states in the character minus 1. This immediately yields a trivial algorithm with running time $\text{O}( 4^{n} \cdot \text{poly}(n) )$, where $n$
is the number of leaves in the trees: guess which tree is convex, and then guess the subset of the $\text{O}( 2n )$ edges in this convex tree where mutations occur.  This leads naturally to the question: 
if $\dMP$ is bounded (i.e. ``small''), is it sufficient to restrict our search to convex characters with a bounded number of states (i.e. to locating bounded-size subsets of mutation edges in the convex tree), irrespective of the number of leaves $n$ in the trees? Such questions
are pertinent to the development of fixed parameter tractable algorithms i.e. algorithms that run quickly on trees with a large number of leaves as long as the distance is small (see e.g. \cite{whidden2013fixed} for related discussions). Prior to this note the best bound on the number of states required was $\lfloor n/2 \rfloor$ \cite{dMP-fischer,dMP-kelk}. Here
we show that the number of states required can indeed be decoupled from $n$. In
particular we show that optimal convex characters
exist with at most $7\dMP-5$ states, which is sharp for $\dMP=1$.
 
We conclude with a discussion of the rather subtle complexity consequences of this result, and whether there is room to tighten the bound further.

\section{Preliminaries}


An unrooted binary phylogenetic $X$-tree $T$ is a tree with only vertices of degree 1 (leaves) or 3 (inner vertices) such that the leaves are bijectively labeled by some finite label set $X$ (where $X$ is often called the set of \emph{taxa}). For brevity, such a tree will simply be called \emph{$X$-tree} in the following. A \emph{character} on $X$ is a surjective map $\chr : X \to \mathcal{C}$
where $\mathcal{C}$ is a set of character \emph{states};
the number of distinct states in the character is denoted by $|\chr|$.
An \emph{extension} $\ext$ of a character $\chr$ to a whole $X$-tree $T$
is a map $\ext : \mathcal{V}(T) \to \mathcal{C}$ such that
$\ext(x) = \chr(x)$ for all $x \in X$.
A \emph{mutation}
 induced by $\ext$ in $T$ is an edge
$\{u,v\} \in \mathcal{E}(T)$ satisfying $\ext(u) \neq \ext(v)$,
and we write $\Delta(T,\ext)$ for the set of all mutation edges.
The extension $\ext$ is said to be \emph{most parsimonious}
if it achieves the minimum number of mutations
over all possible extensions to $T$ of the character $\chr$. This leads naturally
to the definition of \emph{parsimony score}.

\begin{dfn}
\label{dfn:len}
    Let $T$ be any $X$-tree and let $\chr$ be any character on $X$. \\
    Then the \emph{parsimony score} of $\chr$ on $T$ is
    \[
        \ell(T, \chr)
        ~:=~ \min_\ext\,\left|\Delta(T,\ext)\right|
        ~=~ \min_\ext\,\left|\left\{~
            \{u,v\} \in \mathcal{E}(T) ~\mid~ \ext(u) \neq \ext(v) 
        ~\right\}\right|
    \]
    where the minimum is taken over all possible extensions $\ext$
    of the character $\chr$ to $T$.
\end{dfn}


It is well-known that $\,\ell(T, \chr) \geq |\chr| - 1\,$.
%
When a character $\chr$ achieves this $\,\ell(T, \chr) = |\chr|-1\,$ minimum,
then $\chr$ is said to be a \emph{convex} character on $T$.
Some authors follow a slightly different (but equivalent) path, by defining the
\emph{homoplasy score} $\,h(T, \chr) := \ell(T, \chr) - |\chr| + 1\,$ of
a character $\chr$ on $T$ \cite{dMP-moulton}.
In this terminology, we have $\,h(T,\chr) \geq 0\,$ and a character $\chr$
attaining the $\,h(T,\chr)=0\,$ minimum is said to be \emph{homoplasy-free}
(with respect to $T$). Clearly, a character is convex if and only if it is homoplasy-free.

Although characters are defined on a set $X$ of taxa, this set of taxa will
often be made implicit, allowing us to speak of a character on an $X$-tree.
We now use the parsimony score to define a distance function on pairs of
$X$-trees.

\begin{dfn}
\label{dfn:dMP}
    Let $(T_1, T_2)$ be a pair of $X$-trees. \\
    Then the \emph{maximum parsimony distance} between $T_1$ and $T_2$ is
    \[ 
        \dMP(T_1,T_2) ~:=~
        \max_\chr\,\left|\,\ell(T_1, \chr) - \ell(T_2, \chr)\,\right| 
    \]
    where the maximum is taken over all possible characters $\chr$ on $X$.
\end{dfn}

It is known that $\dMP$ is a metric on unrooted phylogenetic trees \cite{dMP-fischer}, hence
we call it a distance. However it is not a metric on \emph{rooted} phylogenetic
trees, because then we lose identity of indiscernibles (i.e. we only get
a pseudometric).

A character $\chr$ on a set $X$ of taxa is said to
\emph{achieve distance $k$} on a pair $(T_1,T_2)$ of $X$-trees when
$\left| \ell(T_1, \chr) - \ell(T_2, \chr) \right| = k$.
If this character achieves distance $\dMP(T_1, T_2)$, then we say
that $\chr$ is an \emph{optimal} character for this pair of trees.

An optimal character for a pair of trees which has the additional property
of being convex on at least one of the trees is (predictably) called
an \emph{optimal convex} character (for this pair of trees). 

\section{Result}

We recall the following earlier result, proven in
\cite[Theorem 3.6]{dMP-fischer} and \cite[Observation 6.1]{dMP-kelk}:

\begin{thm}
\label{thm:optconv}
\emph{\cite{dMP-fischer,dMP-kelk}}    Any pair $(T_1,T_2)$ of $X$-trees admits an optimal convex character with at most $\lfloor |X| / 2 \rfloor$ states.
\end{thm}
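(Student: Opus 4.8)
The plan is to start from an arbitrary optimal character and reshape it in two stages: first make it convex on one tree, and then repeatedly eliminate any state that is used by only a single taxon. Concretely, let $\chr$ be an optimal character and assume without loss of generality that $\ell(T_1,\chr)\ge\ell(T_2,\chr)$, so that $\ell(T_1,\chr)-\ell(T_2,\chr)=\dMP(T_1,T_2)$. Fix a most parsimonious extension $\ext$ of $\chr$ on $T_2$, and let $\chr'$ assign to each taxon a state naming the monochromatic component of $\ext$ that contains it. Then $\chr'$ is convex on $T_2$ with $\ell(T_2,\chr')=\ell(T_2,\chr)$, and since $\chr'$ refines $\chr$ (leaves in a common component share their $\chr$-state), a quotient argument shows $\ell(T_1,\chr')\ge\ell(T_1,\chr)$; hence $\chr'$ is still optimal and convex on $T_2$. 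So I may assume from the outset that $\chr$ is optimal, convex on $T_2$, and --- crucially --- chosen to minimise $|\chr|$ among all such characters.

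The heart of the argument is then to show that this minimiser has no singleton state. I would isolate two facts. First, a purely local lemma: relabelling a single leaf changes the parsimony score on any tree by at most $1$, because a most parsimonious extension for one labelling becomes an extension for the other after altering only the colour of that one pendant edge, so the two scores differ by at most one in either direction. Second, a structural fact about convex characters: if $s$ is a state used only by the taxon $x$, then there is a most parsimonious extension of $\chr$ on $T_2$ in which $s$ occupies exactly the leaf $x$, so that $x$'s pendant edge is the unique mutation edge bounding the $s$-class; let $t$ denote the state of the class on the other side of that edge.

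Now suppose for contradiction that a singleton $s=\{x\}$ exists, and let $\chr'$ be obtained by merging $s$ into $t$ (equivalently, relabelling $x$ from $s$ to $t$), so that $|\chr'|=|\chr|-1$. On $T_2$ the extension above, with $x$ recoloured to $t$, loses exactly the mutation on $x$'s pendant edge, giving $\ell(T_2,\chr')=\ell(T_2,\chr)-1$ and leaving $\chr'$ convex on $T_2$. On $T_1$ the first lemma gives $\ell(T_1,\chr')\ge\ell(T_1,\chr)-1$. Combining, $\ell(T_1,\chr')-\ell(T_2,\chr')\ge\dMP(T_1,T_2)$, and since $\dMP$ is the maximum this must be an equality: $\chr'$ is optimal, convex on $T_2$, and uses one fewer state, contradicting the minimality of $|\chr|$. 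Hence every state of $\chr$ is shared by at least two taxa, so $2|\chr|\le|X|$ and therefore $|\chr|\le\lfloor|X|/2\rfloor$.

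I expect the main obstacle to be the second fact above: pinning down a most parsimonious convex extension on $T_2$ in which the singleton sits alone on its leaf, so that the merge simultaneously (i) keeps the character convex on $T_2$ and (ii) drops $\ell(T_2,\chr)$ by exactly one, while (iii) the single-leaf relabelling is guaranteed to cost at most one on $T_1$. The delicate point is the interplay of these requirements --- convexity must be preserved on $T_2$ without the merge being free to collapse the score on $T_1$ by more than one --- together with making the ``$s$ occupies only its leaf'' normalisation fully rigorous, e.g.\ by arguing that any internal vertices coloured $s$ can be recoloured to a neighbouring class without increasing the mutation count.
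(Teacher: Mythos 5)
Your proposal is correct, and since the paper only cites this theorem (to \cite{dMP-fischer,dMP-kelk}) rather than proving it, the relevant comparison is with the paper's own relabeling machinery: your two-stage argument (convexify on the lower-scoring tree via the components of a most parsimonious extension, then repeatedly merge singleton states into a neighbouring block) is exactly the standard route, and your singleton-merging step is the trivial special case of the paper's Observation~\ref{obs:pendant}/Lemma~\ref{lem:split}, where the ``pendant subtree'' is a single leaf. The step you flag as delicate is not a problem: since the character is convex on $T_2$, each state occupies a single connected component of the induced forest, so you can either normalise by recolouring $C_s\setminus\{x\}$ to the state of any one adjacent component (this trades the vanished boundary mutation for a new one on $x$'s pendant edge, preserving the count), or skip the normalisation entirely and recolour all of $C_s$ to an adjacent state $t$ --- exactly one mutation disappears and none appear, because by convexity no other component adjacent to $C_s$ carries state $t$ --- while the bound $\ell(T_1,\chr')\geq\ell(T_1,\chr)-1$ needs only the character-level fact that $X_s=\{x\}$, independent of any extension.
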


Our main result is the following new bound which is independent of $|X|$. This is particularly
advantageous when $\dMP$ is small and $|X|$ is large.

\begin{bst}
\label{bst}
Any pair $(T_1,T_2)$ of $X$-trees admits an optimal convex character with at most $7 \cdot \dMP\left(T_1,T_2\right) - 5$
states.
\end{bst}

We will prove this theorem subsequently, but first we need to introduce some more concepts and lemmas in the following two sections.

\subsection{The forest induced by a character extension}

In this section we define the forest $F$ \emph{induced} by an extension $\ext$
(of a character $\chr$ to a $X$-tree $T$); this construction will be
extensively used in the proof of the \refbst.

Let us assume that $\ext$ creates $(p\!-\!1)$ mutations in $T$.
If we delete all these mutation edges, we are left with a forest $F$ having
$p$ connected components. Each of these components is a subtree of $T$, whose
vertices all share a common character state (assigned by $\ext$).
We then say that two components of $F$ are \emph{adjacent} if the
two corresponding subtrees of $T$ are connected by one mutation edge
(they cannot be connected by more than one mutation edge, since there are no cycles in $T$).
This yields a graph structure $G(F)$ where the vertices are the components
of $F$ and the edges are the (unordered) pairs of adjacent components,
which can be identified with the mutation edges of $T$.
$G(F)$ has $p$ vertices and $(p-1)$ edges, and must be connected
since $T$ is connected: therefore $G(F)$ can be seen as a tree in its own right.
Figure~\ref{fig:forest} gives a concrete example of such an induced forest.

\begin{figure}[H]
    \centering
    \begin{subfigure}[c]{0.4\textwidth}
        \centering
        \includegraphics{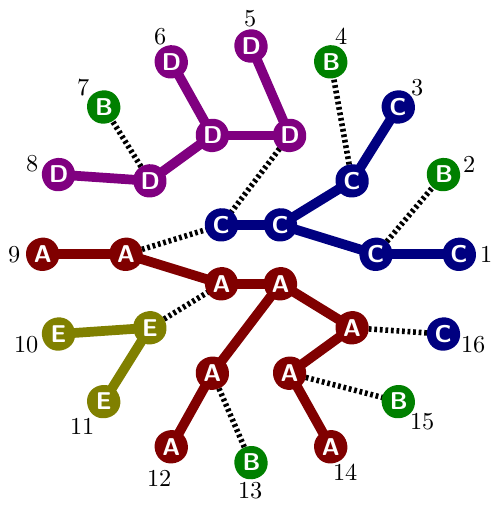}
        \caption{The forest $F$.}
    \end{subfigure}\begin{subfigure}[c]{0.4\textwidth}
        \centering
        \includegraphics{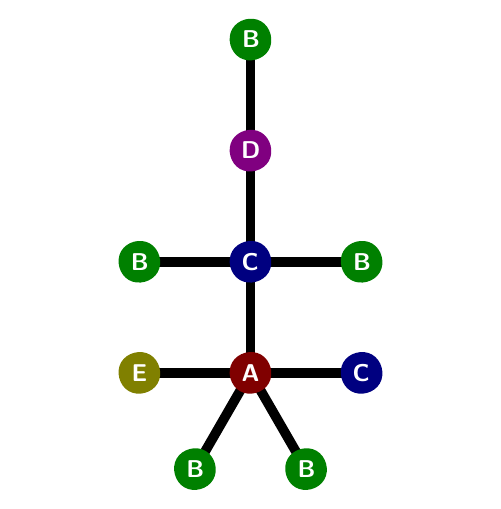}
        \caption{The graph $G(F)$.}
    \end{subfigure}
    \caption[forest]{The forest $F$ induced by a most parsimonious extension
    $\ext$ of the character $\chr = (\state{CBCBDDBDAEEABABC})$ on an $X$-tree
    with leaves labeled from $1$ to $16$, along with its graph structure
    $G(F)$. States $\state{B}$ and $\state{C}$ are repeating states, while all
    others are unique states.}
    \label{fig:forest}
\end{figure}

When $\ext$ is a most parsimonious extension, each component
of the forest must contain at least one leaf of $T$. This in turn implies that a most parsimonious extension never
introduces redundant states i.e. states that were not in the original character. Also,  keep in mind that the forest (and its tree structure) depends on the
choice of the extension $\ext$: even two different most parsimonious
extensions may yield different induced forests.
We conclude this section with some useful terminology and related lemmas.

\begin{dfn}
\label{dfn:states}

    Let $F$ be the forest induced by a most parsimonious extension $\ext$
    of a character $\chr$. Let $\mathcal{C}$ be the set of states used by $\ext$ (which will be equal to the set of states used by $\chr$). We can distinguish between different kinds of states and components:
    \begin{itemize}
        \item
            a state of $\chr$ is \emph{unique} if it is
            assigned to exactly one component of $F$,
        \item
            a state of $\chr$ is \emph{repeating} if it is
            assigned to at least two components of $F$,
        \item
            a component of $F$ is \emph{unique} if its
            assigned state is an unique state of $\chr$,
        \item
            a component of $F$ is \emph{repeating} if its
            assigned state is a repeating state of $\chr$.
    \end{itemize}
    Note that each state is either unique or repeating, but not both.
\end{dfn}

The following lemma gives useful bounds on the numbers of
unique or repeating states and components for a given induced forest.


\begin{lem}
\label{lem:states}
    Let $F$ be the forest induced by any most parsimonious extension $\ext$ of any character
    $\,\chr:X\to\mathcal{C}\,$ to any $X$-tree $T$. The total number of
    components in $F$ is
    $\,|\chr|+h\, =  \ell(T, \chr) + 1$, where $\,h := h(T, \chr)\,$ is the homoplasy score of
    $\chr$ on $T$.
    Then the following inequalities are satisfied.
    \[\begin{array}{rcRCLcl}
        |\chr| - h &~\leq & number of & unique    & states     &~\leq & |\chr|
\\               0 &~\leq & number of & repeating & states     &~\leq & h
\\      |\chr| - h &~\leq & number of & unique    & components &~\leq & |\chr|
\\               h &~\leq & number of & repeating & components &~\leq & 2\,h
    \end{array}
    \]
    Furthermore, $\;\chr$ is convex $\;\Leftrightarrow\;$ $h=0$
    $\;\Leftrightarrow\;$ all states and components are unique.
\end{lem}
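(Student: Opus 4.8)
The plan is to establish the total count of components first, then derive each inequality from that anchor together with a careful accounting of how unique and repeating states and components relate.

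\medskip

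First I would verify the total number of components. By definition $h = \ell(T,\chr) - |\chr| + 1$, so $\ell(T,\chr) = |\chr| + h - 1$. Since $\ext$ is most parsimonious it induces exactly $\ell(T,\chr)$ mutations, and deleting these $\ell(T,\chr)$ edges from $T$ yields $\ell(T,\chr)+1 = |\chr| + h$ components. This gives the claimed total count and will serve as the backbone for everything else. I would denote by $u$ and $r$ the number of unique and repeating \emph{states}, and by $U$ and $R$ the number of unique and repeating \emph{components}. By the construction of the induced forest every state appears on at least one component and (since $\ext$ introduces no redundant states) every component carries a state of $\chr$, so $u + r = |\chr|$ and $U + R = |\chr| + h$.

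\medskip

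Next I would exploit the defining relations between states and components. Each unique state sits on exactly one component, so $U = u$. Each repeating state sits on at least two components, so $R \geq 2r$; on the other hand every one of the $R$ repeating components carries one of the $r$ repeating states, so $R \geq r$ as well (in fact $R \ge 2r$ already implies $R \ge r$). Combining $U = u$, $U + R = |\chr| + h$, and $u + r = |\chr|$ gives $R = r + h$. Feeding this into $R \geq 2r$ yields $r + h \geq 2r$, i.e. $r \leq h$; and since $r \geq 0$ trivially, we obtain $0 \leq r \leq h$, which is the second line. Then $u = |\chr| - r$ gives $|\chr| - h \leq u \leq |\chr|$, the first line, and $U = u$ immediately transfers this to the third line. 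Finally $R = r + h$ with $0 \leq r \leq h$ gives $h \leq R \leq 2h$, the fourth line.

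\medskip

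The equivalences at the end follow by specializing $h = 0$: then the bounds collapse to $r = 0$ and $R = h = 0$, so all states and components are unique, and conversely if all are unique then $r = 0$ forces $R = r + h = 0$ hence $h = 0$; convexity is equivalent to $h = 0$ by definition. I do not anticipate a serious obstacle here, as the argument is essentially bookkeeping; the one point requiring care is justifying that $\ext$ introduces no redundant states (so that $U + R$ counts exactly $|\chr| + h$ components all labelled by genuine states of $\chr$), but this is precisely the fact, noted just before Definition~\ref{dfn:states}, that in a most parsimonious extension every component contains at least one leaf of $T$ and therefore is assigned a state actually occurring in $\chr$.
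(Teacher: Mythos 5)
Your proposal is correct and follows essentially the same route as the paper: both rest on the identities $u+r=|\chr|$, $U+R=|\chr|+h$, $U=u$ and the inequality $R\geq 2r$, from which $r\leq h$ and the remaining bounds follow by substitution. The only cosmetic difference is that you make the intermediate identity $R=r+h$ explicit, which the paper leaves implicit.
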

\begin{proof} Let us partition $\mathcal{C}$ into two sets
$\mathcal{C}_\mathrm{U}$ and $\mathcal{C}_\mathrm{R}$,
respectively containing the unique states and the repeating states.
The set of components in $F$ is similarly split into two sets
$F_\mathrm{U}$ and $F_\mathrm{R}$.
Clearly, we have:~
$|\mathcal{C}_\mathrm{U}| + |\mathcal{C}_\mathrm{R}| \;=\; |\chr|$
~and~ $|F_\mathrm{U}| + |F_\mathrm{R}| \;=\; |\chr| + h$.

Now, according to Definition~\ref{dfn:states} a state is repeating if it is
assigned to at least two (repeating) components of $F$, and every
component has exactly one state assigned to it, so we must have
$2\,|\mathcal{C}_\mathrm{R}| \leq |F_\mathrm{R}|$.
It is also clear that $|\mathcal{C}_\mathrm{U}|=|F_\mathrm{U}|$, because there
is a one-to-one correspondence between unique states and unique components.
Using these two observations and the two preceding equalities, we find:
\[\begin{array}{lccrcccc}
    &|\mathcal{C}_\mathrm{U}| &+& 2\,|\mathcal{C}_\mathrm{R}|
    &~\leq~&
    |F_\mathrm{U}| &+& |F_\mathrm{R}|
\\[1ex] \Longrightarrow\quad~
    &|\chr| &+& |\mathcal{C}_\mathrm{R}|
    &~\leq~&
    |\chr| &+& h
\end{array}\]

Then canceling the $|\chr|$ term in both sides and combining with the obvious
$\,0 \leq |\mathcal{C}_\mathrm{R}|\,$ bound gives the second inequality of
the lemma, which in turn lead to all three others:
\[\begin{array}{lccccc>{\itshape}L}
    & 0 & ~\leq~ & |\mathcal{C}_\mathrm{R} | & ~\leq~ & h
& (2nd inequality) \\[1ex] \Longrightarrow\quad~
    & 0 & ~\leq~ & |\chr| \;-\; |\mathcal{C}_\mathrm{U} | & ~\leq~ & h
& \\[1ex] \Longrightarrow\quad~
    & -h & ~\leq~ & |\mathcal{C}_\mathrm{U} | \;-\; |\chr| & ~\leq~ & 0
& \\[1ex] \Longrightarrow\quad~
    & |\chr|-h & ~\leq~ & |\mathcal{C}_\mathrm{U} | & ~\leq~ & |\chr|
& (1st inequality) \\[1ex] \Longrightarrow\quad~
    & |\chr|-h & ~\leq~ & |F_\mathrm{U} | & ~\leq~ & |\chr|
& (3rd inequality) \\[1ex] \Longrightarrow\quad~
    & |\chr|-h & ~\leq~ & |\chr| \;+\; h \;-\; |F_\mathrm{R}| & ~\leq~ & |\chr|
& \\[1ex] \Longrightarrow\quad~
    & -|\chr| & ~\leq~ & |F_\mathrm{R}| \;-\; |\chr| \;-\; h & ~\leq~ & h-|\chr|
& \\[1ex] \Longrightarrow\quad~
    & h & ~\leq~ & |F_\mathrm{R}| & ~\leq~ & 2\,h
& (4th inequality)
\end{array}\]

Moreover, if $h=0$, with the 1st inequality we get $ |\mathcal{C}_\mathrm{U} |=|\chi|$, and with the 3rd inequality we get $ |F_\mathrm{U} |=|\chi|$, which implies that all states and all components are unique. On the other hand, if all states and components are unique, we have $|F_\mathrm{R}|=0$, which leads to $h=0$ by the 4th inequality. This completes the proof.
\end{proof}

\subsection{Relabeling states and sufficient conditions for the existence of ``good'' pairs of states}

Here \emph{relabeling} the states
of a given character $\,\chr : X \to \mathcal{C}\,$ simply means composing it
with some surjection $\,\varphi : \mathcal{C} \to \mathcal{C}'\,$ in order to
produce a new character $\,\chr' := \varphi \circ \chr : X \to \mathcal{C}'\,$.
Clearly, $|\chr'| \leq |\chr|$ and
$\,\ell(T,\chr') \leq \ell(T,\chr)\,$ for every $X$-tree $T$.
The proof of the {\refbst}  is based on a relabeling argument in which
only one state of the character is relabeled, i.e.
when $\varphi(\state{A}) = \state{B}$ for two states
$\state{A}, \state{B} \in \mathcal{C}$ but $\varphi$ stays the identity on
states other than $\state{A}$. The high-level idea is to show that, whenever an optimal convex
character exists with more than $7\dMP\left(T_1,T_2\right) - 5$ states, it will always be possible
to find two states $\state{A}$ and $\state{B}$ such that relabeling $\state{A}$ as $\state{B}$ causes the parsimony
score of both trees to decrease by exactly one. That is, a new optimal convex character with \emph{fewer} states can be found, and
the theorem will follow.

 Let $(T_1, T_2)$ be a pair of $X$-trees and let $\chi$ be an \emph{optimal} convex character for this pair. Without loss of generality, let $\chi$ be convex on $T_1$.  Let $\ext_1$ be a most parsimonious extension
of $\chi$ to $T_1$ and $\ext_2$ a most parsimonious extension of $\chi$ to $T_2$. Let $F_1$ and $F_2$ be the forests induced by $\ext_1$ and $\ext_2$
respectively. We say that two components $\state{A}$ and $\state{B}$ are $F_i$-\emph{adjacent} if they are adjacent in the forest $F_i$.  (Note that if a state
is unique, or we are focussing on $F_1$, the term ``state'' and ``component'' can be used interchangeably.)

\begin{obs}
\label{obs:adjacent}
Let $\state{A}$ and $\state{B}$ be two distinct states that are $F_1$-adjacent. Let $\chi'$ be the new character obtained by relabeling $\state{A} := \state{B}$. Then $\chi'$ is a convex character.
In particular,
$ \ell(T_1, \chr') = \ell(T_1, \chr) - 1$ and $\chr'$ uses exactly one fewer state than $\chr$.
Moreover, if $\ell(T_2,\chr') \geq \ell(T_2,\chr) - 1$, then $\chi'$ is an \emph{optimal} convex character (that uses exactly one fewer state than $\chi$).
\end{obs}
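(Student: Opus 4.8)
The statement bundles three quantitative claims (the state count, the decrease of $\ell(T_1,\cdot)$, and the convexity of $\chi'$) together with a conditional optimality claim, so the plan is to dispatch them in that order and then close with the optimality argument. The structural fact that makes everything work is that convexity of $\chi$ on $T_1$ forces a very rigid shape on $F_1$: by Lemma~\ref{lem:states} with $h(T_1,\chi)=0$, \emph{every} state of $\chi$ is unique and corresponds bijectively to a component of $F_1$. This is exactly what lets me treat $\state{A}$ and $\state{B}$ interchangeably as states and as components when working in $F_1$.

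First I would settle the state count. Because each component of a most parsimonious forest contains at least one leaf, both $\state{A}$ and $\state{B}$ genuinely occur as states of $\chi$, so merging them via $\varphi(\state{A})=\state{B}$ strictly reduces the number of states, giving $|\chi'|=|\chi|-1$. Next I would bound $\ell(T_1,\chi')$ from above by exhibiting a concrete extension. Let $\ext_1'$ be obtained from $\ext_1$ by recolouring every vertex coloured $\state{A}$ to $\state{B}$; this is a legitimate extension of $\chi'$ to $T_1$ since it agrees with $\chi'$ on $X$. I then count its mutation edges relative to $\ext_1$. Here uniqueness is decisive: the component $\state{A}$ meets the component $\state{B}$ across exactly one mutation edge (adjacency in $F_1$ means a single connecting edge, as $T_1$ is acyclic), while every other mutation edge incident to the $\state{A}$-region joins $\state{A}$ to a component whose state is distinct from $\state{B}$ and therefore survives. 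Thus exactly one mutation edge is destroyed and none created, so $|\Delta(T_1,\ext_1')| = \ell(T_1,\chi)-1 = |\chi|-2$. Combined with the universal bound $\ell(T_1,\chi') \geq |\chi'|-1 = |\chi|-2$, equality holds, which simultaneously proves $\ell(T_1,\chi')=|\chi'|-1$ (hence $\chi'$ is convex on $T_1$) and $\ell(T_1,\chi')=\ell(T_1,\chi)-1$.

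For the conditional optimality claim I would first fix the sign of the optimal difference: since $\chi$ is convex on $T_1$ we have $\ell(T_1,\chi)=|\chi|-1 \leq \ell(T_2,\chi)$, so $\dMP(T_1,T_2) = \ell(T_2,\chi)-\ell(T_1,\chi) \geq 0$. Under the hypothesis $\ell(T_2,\chi') \geq \ell(T_2,\chi)-1$, combining it with the established identity $\ell(T_1,\chi')=\ell(T_1,\chi)-1$ yields $\ell(T_2,\chi')-\ell(T_1,\chi') \geq (\ell(T_2,\chi)-1)-(\ell(T_1,\chi)-1) = \dMP(T_1,T_2) \geq 0$. Since $\dMP$ is by definition the maximum of $|\ell(T_1,\cdot)-\ell(T_2,\cdot)|$ over all characters, the reverse inequality is automatic, so $\chi'$ achieves distance exactly $\dMP(T_1,T_2)$ and is therefore an optimal convex character using one fewer state.

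The main obstacle, and really the only delicate point, is the mutation-edge count for $\ext_1'$: I must argue that relabeling destroys exactly one mutation edge, neither more nor fewer. The danger is that, absent convexity, $\state{A}$ could be adjacent to several components all carrying state $\state{B}$ (so more than one mutation edge would vanish), or the recoloured extension might fail to be most parsimonious for $\chi'$. The convexity hypothesis, through the uniqueness half of Lemma~\ref{lem:states}, is precisely what excludes these pathologies, and I would make sure the write-up isolates this as the crux rather than letting it disappear into routine edge-counting.
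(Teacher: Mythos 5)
Your proposal is correct and follows essentially the same route as the paper's proof: exhibit the recoloured extension to get $\ell(T_1,\chi')\leq\ell(T_1,\chi)-1$, invoke the universal lower bound $\ell(T_1,\chi')\geq|\chi'|-1$ to force equality and hence convexity, and deduce optimality from the hypothesis on $T_2$. The only difference is that you prove the mutation count drops by \emph{exactly} one, whereas the paper needs (and states) only the inequality $|\Delta(T_1,\ext'_1)|\leq|\Delta(T_1,\ext_1)|-1$, since the lower bound supplies the other direction for free.
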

\begin{proof}
Relabeling $\state{A} := \state{B}$ within the extension $\ext_1$ yields an extension $\ext'_1$ (of $\chi'$) such that $|\Delta(T_1,\ext'_1)| \leq |\Delta(T_1,\ext_1)| - 1$. This is because a mutation is saved
on the edge generating the adjacency between $\state{A}$ and $\state{B}$.  Hence, $\ell(T_1, \chr' ) \leq  \ell(T_1, \chr) - 1$. Given that $|\chi'| = |\chi|-1$, and the natural lower bound
$\ell(T_1, \chr' ) \geq |\chr'|-1$, it follows that $\ell(T_1, \chr' ) \geq |\chr'|-1=|\chi|-2=  \ell(T_1, \chr) - 1$, and
the convexity of $\chr'$ follows. If, additionally, $\ell(T_2,\chr') \geq \ell(T_2,\chr) - 1$ then the optimality of $\chr'$ is immediate.
\end{proof}

We are thus interested in identifying states $\state{A}$ and $\state{B}$ with the following property: $\state{A}$ and $\state{B}$  are $F_1$-adjacent and
$\ell(T_2,\chr') \geq \ell(T_2,\chr) - 1$ where $\chr'$ is obtained by taking $\state{A} := \state{B}$. We call such a pair of states a \emph{good pair}.

Given an $X$-tree $T$
and an edge $e$ of $T$, deleting $e$ breaks $T$ into two connected components and this naturally induces a bipartition $P|Q$ of $X$. We say then
that $P|Q$ is the \emph{split generated in $T$ by $e$}.

\begin{lem}
\label{lem:split}
Let $\state{A}$ and $\state{B}$ be two distinct states that are $F_1$-adjacent and let $X_{\state{A}}, X_{\state{B}} \subseteq X$ be the taxa that
are labeled with $\state{A}, \state{B}$ respectively. Suppose that in $T_2$, there exists an edge $e$ that generates a split $P|Q$, where $X_{\state{A}} \subseteq P$ and $X_{\state{B}} \subseteq Q$.
Then $(\state{A}, \state{B})$ is a good pair. 
\end{lem}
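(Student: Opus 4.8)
The plan is to establish the good-pair condition by proving $\ell(T_2,\chr') \geq \ell(T_2,\chr) - 1$ directly; combined with the $F_1$-adjacency of $\state{A}$ and $\state{B}$ (which is part of the hypothesis), this is exactly what the definition of a good pair demands. The natural route is to take any most parsimonious extension of the relabeled character $\chr'$ on $T_2$ and build from it an extension of the \emph{original} character $\chr$ that uses at most one additional mutation, which yields $\ell(T_2,\chr) \leq \ell(T_2,\chr') + 1$.

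Concretely, I would fix a most parsimonious extension $\ext_2'$ of $\chr'$ to $T_2$; since most parsimonious extensions never introduce states absent from the character (as noted after the definition of the induced forest), $\ext_2'$ never assigns the now-defunct state $\state{A}$. The edge $e$ splits $T_2$ into two subtrees $T_2^P$ and $T_2^Q$ containing $P$ and $Q$, joined by $e=\{u,v\}$ with $u \in T_2^P$ and $v \in T_2^Q$. I then define a new extension $\widehat\ext_2$ of $\chr$ by ``unmerging'' the relabeled state on the $P$-side: every vertex of $T_2^P$ carrying the merged state $\state{B}$ under $\ext_2'$ is reassigned $\state{A}$, while all other vertices (and all of $T_2^Q$) keep their $\ext_2'$-labels. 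The first thing to verify is that $\widehat\ext_2$ really extends $\chr$: a leaf in $X_{\state{A}} \subseteq P$ carries $\state{B}$ under $\chr'$, hence under $\ext_2'$, and lies in $T_2^P$, so it is correctly reset to $\state{A}$; a leaf in $X_{\state{B}} \subseteq Q$ retains $\state{B}$; every other leaf is untouched and keeps its correct state.

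The heart of the argument is then an edge-by-edge comparison of $\Delta(T_2,\widehat\ext_2)$ with $\Delta(T_2,\ext_2')$. On $T_2^P$ the modification merely renames $\state{B}$ to $\state{A}$, and because $\state{A}$ never occurs in $\ext_2'$ this renaming is a bijection on the labels present, so it preserves equality and inequality of endpoints and changes no edge's mutation status; on $T_2^Q$ nothing changes at all. Hence the only edge that can behave differently is the crossing edge $e$.

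I expect the one genuinely delicate point to be showing that $e$ cannot \emph{lose} a mutation and can gain at most one. If $e$ was already a mutation under $\ext_2'$, then since $\ext_2'(v) \neq \state{A}$, reassigning a $\state{B}$-labelled $u$ to $\state{A}$ cannot make the endpoints agree (and if $u$ was not labelled $\state{B}$ it is unchanged), so $e$ remains a mutation; if $e$ was not a mutation, the endpoints can become unequal only when both carried $\state{B}$ and $u$ is reset to $\state{A}$, adding exactly one mutation. Therefore $\left|\Delta(T_2,\widehat\ext_2)\right| \leq \left|\Delta(T_2,\ext_2')\right| + 1$, so $\ell(T_2,\chr) \leq \ell(T_2,\chr') + 1$, i.e. $\ell(T_2,\chr') \geq \ell(T_2,\chr) - 1$, completing the verification that $(\state{A},\state{B})$ is a good pair.
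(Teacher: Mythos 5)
Your proof is correct and follows essentially the same route as the paper's: both take a most parsimonious extension of $\chr'$ on $T_2$, switch the $\state{B}$-labelled vertices on the $X_{\state{A}}$-side of the split edge $e$ back to $\state{A}$, and observe that the only edge that can gain a mutation is $e$ itself. The paper phrases this as a proof by contradiction while you derive the inequality $\ell(T_2,\chr) \leq \ell(T_2,\chr') + 1$ directly, but this is a cosmetic difference.
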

\begin{proof}
It is sufficient to prove $\ell(T_2,\chr') \geq \ell(T_2,\chr) - 1$. Suppose, for the sake of contradiction, $\ell(T_2,\chr') \leq \ell(T_2,\chr) - 2$. Let $\ext'_2$ be
a most parsimonious extension of $\chi'$ to $T_2$. Deleting $e$ from $T_2$ breaks $\mathcal{V}(T_2)$ into two connected components $\mathcal{V}_{\state{A}}$ and $\mathcal{V}_{\state{B}}$, one containing all taxa  $X_{\state{A}}$ and the other containing all
taxa $X_{\state{B}}$. (Note that here  $X_{\state{A}},  X_{\state{B}}$ refer to the taxa that were labeled $\state{A}$ and $\state{B}$ \emph{before} the relabeling). We  adjust $\ext'_2$ as follows: every vertex
that is in $\mathcal{V}_{\state{A}}$ and labeled with state \state{B}, is switched to state \state{A}. This yields an extension $\exx$ of $\chi$ to $T_2$ such that
$|\Delta(T_2,\exx)| \leq |\Delta(T_2,\ext'_2)| + 1$. This is because the only new mutation that can be created is on the edge $e$. However, this implies $|\Delta(T_2,\exx)|  \leq |\Delta(T_2,\ext'_2)| + 1 \leq \ell(T_2,\chr') +1  \leq (\ell(T_2,\chr) - 2) +1 < \ell(T_2,\chr)$, yielding
a contradiction.  
\end{proof}

 Recall the definitions of \emph{unique} and \emph{repeating} from earlier. We emphasise that here we classify states as unique or repeating with reference to $F_2$ (which is induced by $\ext_2$).

\begin{obs}
\label{obs:pendant}
Let $\state{A}$ and $\state{B}$ be two distinct states that are $F_1$-adjacent and let $\state{A}$ be a unique state.  Let $X_{\state{A}}, X_{\state{B}} \subseteq X$ be the taxa that
are labeled with $\state{A}, \state{B}$ respectively. Suppose that in $T_2$, there exists an edge $e$ that generates a split $X_{\state{A}}|Q$ (i.e. the $X_{\state{A}}$ taxa form a ``pendant subtree'' in $T_2$). Then $(\state{A}, \state{B})$ is a good pair.
\end{obs}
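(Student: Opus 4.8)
The plan is to reduce Observation~\ref{obs:pendant} to Lemma~\ref{lem:split}, which already gives the desired conclusion once we can produce a suitable split $P|Q$ in $T_2$ with $X_{\state{A}} \subseteq P$ and $X_{\state{B}} \subseteq Q$. By hypothesis there is an edge $e$ in $T_2$ generating the split $X_{\state{A}}|Q$; so I would set $P := X_{\state{A}}$ and check that this satisfies the two containment requirements of the lemma. The first, $X_{\state{A}} \subseteq P = X_{\state{A}}$, is immediate. The only thing that needs an argument is the second requirement, namely $X_{\state{B}} \subseteq Q$.

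For that second containment, the point is that $\state{A}$ and $\state{B}$ are \emph{distinct} states of $\chr$, so the taxa in $X_{\state{A}}$ and the taxa in $X_{\state{B}}$ are disjoint. Since the split generated by $e$ is exactly $X_{\state{A}}|Q$, every taxon not in $X_{\state{A}}$ lies in $Q$; in particular $X_{\state{B}} \cap X_{\state{A}} = \emptyset$ forces $X_{\state{B}} \subseteq Q$. This is where the hypothesis that $\state{A}$ is a \emph{unique} state does its work: because $\state{A}$ is unique, its taxa $X_{\state{A}}$ correspond to a single component of $F_2$, and the ``pendant subtree'' assumption says precisely that this set of taxa can be cut off by a single edge, i.e. it really is one side of a split. (If $\state{A}$ were repeating, the taxa carrying state $\state{A}$ might be scattered across several components of $F_2$ and need not form one side of any single split, so the reduction to Lemma~\ref{lem:split} would break down.)

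With both containments verified, Lemma~\ref{lem:split} applies verbatim to the pair $(\state{A}, \state{B})$ and the edge $e$, and its conclusion---that $(\state{A}, \state{B})$ is a good pair---is exactly what Observation~\ref{obs:pendant} asserts. So the proof is a short two-line reduction rather than a fresh argument.

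The only \textbf{potential obstacle} I foresee is purely bookkeeping: I must make sure the taxa labels $X_{\state{A}}, X_{\state{B}}$ are interpreted consistently with Lemma~\ref{lem:split}, i.e. as the taxa labeled $\state{A}$ and $\state{B}$ \emph{before} any relabeling, since the good-pair property is ultimately about comparing $\ell(T_2,\chr')$ with $\ell(T_2,\chr)$ for the relabeled character $\chr'$. As long as this is the same convention used in the statement of the lemma, there is nothing further to prove; the disjointness of distinct states is the whole content.
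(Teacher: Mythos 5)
Your proposal is correct and matches the paper's own (one-line) proof, which simply declares Observation~\ref{obs:pendant} immediate from Lemma~\ref{lem:split}: the hypothesised edge $e$ gives the split $P|Q$ with $P=X_{\state{A}}$, and disjointness of $X_{\state{A}}$ and $X_{\state{B}}$ gives $X_{\state{B}}\subseteq Q$. Your only slight imprecision is attributing the second containment to the uniqueness of $\state{A}$; the split hypothesis alone carries the reduction, and uniqueness is merely the reason such a pendant edge is plausible to look for.
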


\begin{obs}
\label{obs:twounique}
Let $\state{A}$ and $\state{B}$ be two distinct states that are $F_1$-adjacent and such that both are unique.Then $(\state{A}, \state{B})$ is a good pair.
\end{obs}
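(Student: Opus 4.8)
The plan is to reduce this observation to Lemma~\ref{lem:split} by exhibiting, in $T_2$, a single edge that separates the taxa $X_{\state{A}}$ from the taxa $X_{\state{B}}$. The leverage comes entirely from the uniqueness hypothesis (with respect to $F_2$): because $\state{A}$ is a unique state, there is exactly one component of $F_2$ carrying state $\state{A}$, and since every $x \in X_{\state{A}}$ satisfies $\ext_2(x)=\state{A}$, all of $X_{\state{A}}$ lies inside this single component, which I will call $C_{\state{A}}$. Symmetrically, all of $X_{\state{B}}$ lies inside the unique $\state{B}$-component $C_{\state{B}}$. As $\state{A} \neq \state{B}$, these components are distinct, and since the components of $F_2$ partition $\mathcal{V}(T_2)$ into vertex-disjoint connected subtrees, $C_{\state{A}}$ and $C_{\state{B}}$ are vertex-disjoint connected subtrees of $T_2$ containing $X_{\state{A}}$ and $X_{\state{B}}$ respectively.

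First I would invoke the tree structure $G(F_2)$ introduced earlier: it is a tree whose vertices are the components of $F_2$ and whose edges correspond bijectively to the mutation edges of $T_2$. In $G(F_2)$ there is a unique path from $C_{\state{A}}$ to $C_{\state{B}}$; let $e$ be the mutation edge of $T_2$ corresponding to any edge on this path. Deleting $e$ disconnects $G(F_2)$ into two parts, one containing $C_{\state{A}}$ and the other containing $C_{\state{B}}$, which translates back into a split $P|Q$ generated in $T_2$ by $e$ with all of $C_{\state{A}}$ (hence $X_{\state{A}}$) on the $P$ side and all of $C_{\state{B}}$ (hence $X_{\state{B}}$) on the $Q$ side.

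At that point the hypotheses of Lemma~\ref{lem:split} are met: $\state{A}$ and $\state{B}$ are $F_1$-adjacent by assumption, and $e$ generates a split separating $X_{\state{A}}$ from $X_{\state{B}}$, so the lemma yields immediately that $(\state{A}, \state{B})$ is a good pair. I do not expect a serious obstacle; the only point requiring care is the elementary tree fact that two vertex-disjoint connected subtrees can always be separated by a single edge, which is precisely what the passage to $G(F_2)$ makes transparent. Viewed this way, the observation is a clean strengthening of Observation~\ref{obs:pendant}: there the separating edge had to create a \emph{pendant} subtree for $\state{A}$, whereas here the assumption that both states are unique lets each state occupy a single connected region of $F_2$, so no such restriction on the shape of the split is needed.
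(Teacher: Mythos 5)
Your proof is correct and follows essentially the same route as the paper: uniqueness of $\state{A}$ and $\state{B}$ forces each to occupy a single connected component of $F_2$, a mutation edge on the (unique) path between these two disjoint subtrees generates a split separating $X_{\state{A}}$ from $X_{\state{B}}$, and Lemma~\ref{lem:split} finishes. Your detour through $G(F_2)$ versus the paper's direct appeal to a simple path in $T_2$ is only a cosmetic difference.
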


\begin{proof}
Observation \ref{obs:pendant} is immediate from Lemma \ref{lem:split}. Observation \ref{obs:twounique} is slightly more subtle. The point here is that if a state $\state{U}$ is unique then in $T_2$ all the vertices allocated state $\state{U}$ (by extension $\ext_2$) form a \emph{single} connected subgraph. In particular this applies to both $\state{A}$ and $\state{B}$. Given that these two states are necessarily distinct, any simple path in $T_2$  between these two connected subgraphs must pass through some edge in $\Delta(T_2,\ext_2)$, and this edge generates a split with all the $\state{A}$ taxa on one side
and all the $\state{B}$ taxa on the other, so Lemma \ref{lem:split} applies.
\end{proof}

See figure~\ref{fig:obs12} for an example where Observations
\ref{obs:pendant} and \ref{obs:twounique} may be used to decrease the
number of character states.

\begin{figure}[H]
    \centering
    \begin{subfigure}[c]{0.33\textwidth}
        \centering
        \includegraphics{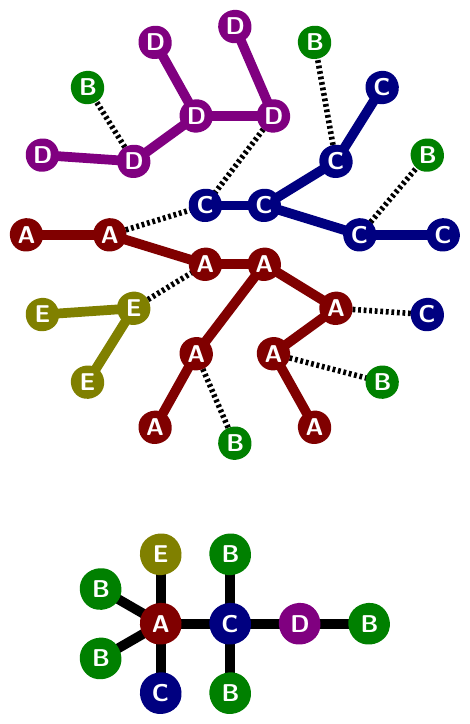}
        \caption{Before any relabeling.\\
                 $\chr = (\state{CBCBDDBDAEEABABC})$}
    \end{subfigure}\begin{subfigure}[c]{0.33\textwidth}
        \centering
        \includegraphics{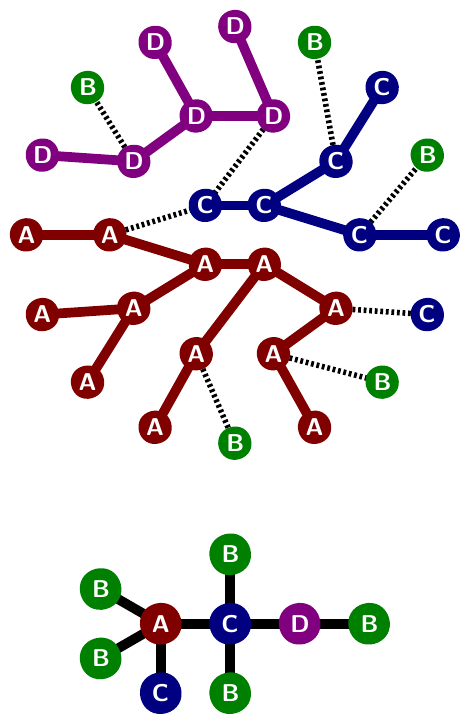}
        \caption{After relabeling $\state{E} := \state{A}$.\\
                 $\chr' = (\state{CBCBDDBDAAAABABC})$}
    \end{subfigure}\begin{subfigure}[c]{0.33\textwidth}
        \centering
        \includegraphics{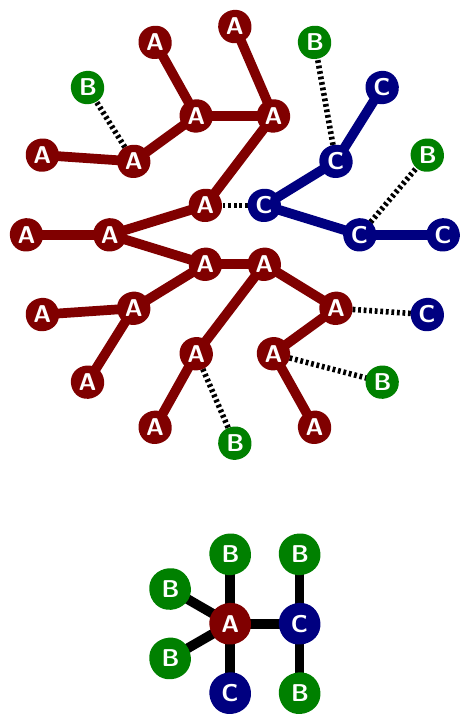}
        \caption{After relabeling $\state{D} := \state{A}$.\\
                 $\chr'' = (\state{CBCBAABAAAAABABC})$}
    \end{subfigure}
    \captionsetup{singlelinecheck=off}
    \caption[obs12]{Successive applications of Observations \ref{obs:pendant}
    and \ref{obs:twounique} to decrease the number of states used by
    an optimal convex character.
    Only the second forests ($F_2$ and its subsequent transformations), along
    with their corresponding graph structures, are shown in these figures.

    \begin{description}
        \item[\quad\itshape(a)]
        The original $F_2$ forest before any relabeling of the states
        of the $\chr$ character.
        The state $\state{E}$ is unique and its component
        in $F_2$ is a pendant subtree.
        Assuming that $\state{E}$ is $F_1$-adjacent to $\state{A}$,
        Observation \ref{obs:pendant} applies and we may relabel
        $\state{E} := \state{A}$. This gives a new optimal convex character
        $\chr'$ which does not use the state $\state{E}$ anymore.
        \item[\quad\itshape(b)]
        The forest $F'_2$ induced by a most parsimonious extension $\ext'_2$
        of $\chr'$ to $T_2$ (note that this is not the only possibility:
        another $\ext'_2$ could induce another $F'_2$).
        States $\state{A}$ and $\state{D}$ are both unique in $F'_2$.
        Assuming $F'_1$-adjacency (where $F'_1$ is induced by some $\ext'_1$),
        Observation \ref{obs:twounique} applies and we may relabel
        $\state{D} := \state{A}$.
        This gives yet another optimal convex character $\chr''$.
        \item[\quad\itshape(c)]
        The forest $F''_2$ induced by a most parsimonious extension
        $\ext''_2$ of $\chr''$ to $T_2$.
        Only three states $\state{A}$, $\state{B}$, and $\state{C}$
        are used by $\chr''$, compared to five states in the original $\chr$
        character.
    \end{description}}
    \label{fig:obs12}
\end{figure}

\begin{lem}
\label{lem:neighbours}
Let $\state{A}$ and $\state{B}$ be two distinct states that are $F_1$-adjacent where $\state{A}$ is a unique state. Assume the situation described in Observation \ref{obs:pendant} does not hold, i.e. there is no edge $e$ which generates a split $X_\state{A}|*$ in $T_2$. If there exists a unique state $\state{C} \neq \state{A}$ such that $\state{A}$ and $\state{C}$ are $F_2$-adjacent and both of degree 2 in $G(F_2)$, then $(\state{A},\state{B})$ is a good pair.
\end{lem}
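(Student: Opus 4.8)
The plan is to argue by contradiction, reusing the ``cut-and-switch'' template of Lemma~\ref{lem:split}. Suppose $(\state A,\state B)$ is not a good pair; since $\state A$ and $\state B$ are $F_1$-adjacent this means $\ell(T_2,\chr')\le\ell(T_2,\chr)-2$, where $\chr'$ is obtained by relabelling $\state A:=\state B$. Fix a most parsimonious extension $\ext'_2$ of $\chr'$ to $T_2$. The goal is to build an extension $\exx$ of the original character $\chr$ with $|\Delta(T_2,\exx)|\le|\Delta(T_2,\ext'_2)|+1$; this yields $|\Delta(T_2,\exx)|\le\ell(T_2,\chr')+1\le\ell(T_2,\chr)-1<\ell(T_2,\chr)$, contradicting the definition of $\ell(T_2,\chr)$. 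A preliminary remark (the one made after Figure~\ref{fig:forest}) is that $\ext'_2$, being most parsimonious, never assigns the now-redundant state $\state A$; hence recolouring any connected region of $T_2$ entirely to $\state A$ can create mutations only on the \emph{boundary} edges of that region.

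First I would turn the hypotheses into a concrete picture of $T_2$. Because $\state A$ is unique, the vertices receiving state $\state A$ under $\ext_2$ form a single connected subtree whose only leaves are $X_{\state A}$; since $\state A$ has degree $2$ in $G(F_2)$, this subtree meets the rest of $T_2$ in exactly two mutation edges, one of which — call it $f$ — realises the $F_2$-adjacency to $\state C$. Applying the same reasoning to $\state C$ (also unique, also degree $2$), the union of the $\state A$- and $\state C$-subtrees is a single connected ``corridor'' $M$ whose only leaves are $X_{\state A}\cup X_{\state C}$ and which is attached to the remainder of $T_2$ by exactly two mutation edges: $e_A$ on the far side of $\state A$ and $e_C$ on the far side of $\state C$. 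Removing $e_A$ and $e_C$ detaches $M$, with $X_{\state A}$ strictly inside it. The failure of the pendant situation of Observation~\ref{obs:pendant} is precisely what guarantees that $X_{\state A}$ is genuinely interior, i.e.\ that no single edge isolates exactly $X_{\state A}$; this rules out the degenerate case and forces us to work with a two-exit corridor.

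The surgery I would then perform is the switch of Lemma~\ref{lem:split}: choose a connected region $R$ with $X_{\state A}\subseteq R$ and containing no taxon outside $X_{\state A}$, and reset every $\state B$-labelled vertex of $R$ to $\state A$. Inside $R$ this never creates a mutation, and the taxa $X_{\state C}$ (labelled $\state C$, hence untouched) keep their state, so the result is a legitimate extension $\exx$ of $\chr$ whose only new mutations sit on the boundary of $R$. The cleanest way to finish is to exhibit a single edge $e$ of $T_2$ whose split places all of $X_{\state A}$ on one side and all of $X_{\state B}$ on the other, and then quote Lemma~\ref{lem:split} verbatim. The two candidate edges produced by the structure are $f$ (which separates $X_{\state A}$ from $X_{\state C}$) and the corridor exit $e_A$; the unique, degree-$2$ neighbour $\state C$ is meant to be exactly the ingredient that lets one of these do the job.

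The hard part — and the step I expect to be the main obstacle — is controlling \emph{both} exits of the corridor at once. A priori the switch could cost a new mutation at each of $e_A$ and $e_C$, giving $+2$ rather than the permitted $+1$; equivalently, $X_{\state B}$ might meet the two components of $T_2\setminus M$ on opposite sides of the corridor, in which case no single edge separates $X_{\state A}$ from $X_{\state B}$. The intended mechanism is presumably to pin the $\state C$-block against the exit $e_C$: the mutation that $\ext'_2$ is already forced to spend in order to separate $X_{\state C}$ (state $\state C$) from the merged $\state B$-material on the corridor's $\state C$-side should be chargeable against the potential new mutation at $e_C$, leaving at most one genuinely new mutation on the $\state A$-side. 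Making this charging airtight is delicate, and in fact the worst case (both exits opening onto $\state B$-material) seems to be ruled out only when one also invokes the standing assumption that $\chr$ is an \emph{optimal} convex character, rather than purely local facts about $T_2$; verifying that this bad two-exit configuration cannot arise, using the uniqueness and degree-$2$ hypotheses on $\state C$ together with optimality, is where I would concentrate the effort.
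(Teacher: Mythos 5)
Your setup matches the paper's: argue by contradiction from $\ell(T_2,\chr')\le\ell(T_2,\chr)-2$, isolate the two-exit ``corridor'' formed by the $\state{A}$- and $\state{C}$-components with boundary edges $e_{\state{A}}$, $e_{\state{A}\state{C}}$, $e_{\state{C}}$, and try to rebuild an extension of $\chr$ at a cost of at most one extra mutation. But the case you correctly flag as the obstacle --- when the merged $\state{B}$-material reaches $X_{\state{A}}$ through \emph{both} exits (the paper's ``left merging and right merging'' case) --- is exactly the case you do not resolve, and both of the escape routes you sketch for it are dead ends. Exhibiting a single edge whose split separates $X_{\state{A}}$ from $X_{\state{B}}$ and quoting Lemma~\ref{lem:split} cannot work in general: $\state{B}$ is repeating (otherwise Observation~\ref{obs:twounique} already applies), so its taxa may genuinely sit on both sides of the corridor, which is precisely why Lemma~\ref{lem:neighbours} exists as a separate statement. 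And the optimality of $\chr$ is not the missing ingredient either; the paper's resolution is purely local.

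The idea you are missing is a compensation count that requires recolouring the $\state{C}$-block as well, not only a region around $X_{\state{A}}$. In the both-merging case the paper sets $\exx$ by switching all of $\mathcal{V}_{\state{A}}$ back to $\state{A}$ \emph{and} all of $\mathcal{V}_{\state{C}}$ back to $\state{C}$. This risks three new mutations, on $e_{\state{A}}$, $e_{\state{A}\state{C}}$ and $e_{\state{C}}$. However, right-merging means $\ext'_2$ contains a monochromatic $\state{B}$-path from $X_{\state{A}}$ to $X_{\state{B}}$ traversing $e_{\state{A}\state{C}}$ and $e_{\state{C}}$, hence running through $\mathcal{V}_{\state{C}}$; since $|X_{\state{C}}|\ge 2$ and $\state{C}$ is not pendant, $\ext'_2$ must already pay for at least two $\state{B}$--$\state{C}$ mutations on edges with both endpoints inside $\mathcal{V}_{\state{C}}$, and these vanish when $\mathcal{V}_{\state{C}}$ is recoloured to $\state{C}$. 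The net increase is therefore at most one, giving the contradiction. (The three easier subcases --- neither, left-only, right-only merging --- are handled by the region-recolouring surgery you describe, deleting $e_{\state{A}}$ or $e_{\state{A}\state{C}}$ from the $\state{B}$-subgraph as appropriate.) Without this charging argument your proof does not close.
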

\begin{proof}
If $\state{A}$ and $\state{B}$ are both unique then we are done, by Observation \ref{obs:twounique}. Hence we may assume that $\state{B}$ is a repeating state i.e. there are at least 2 components in $F_2$ that have state $\state{B}$. Let $\mathcal{V}_{\state{A}}, \mathcal{V}_{\state{C}} \subseteq \mathcal{V}(T_2)$ be those vertices of $T_2$ that are allocated state \state{A}, \state{C} (respectively) by $\ext_2$. Let $X_{\state{A}}, X_{\state{B}}, X_{\state{C}} \subseteq X$ be defined similarly for taxa. We have $|X_{\state{A}}|, |X_{\state{C}}| \geq 2$ because otherwise the situation in Observation \ref{obs:pendant} would trivially apply.

Let $e_{\state{A}\state{C}} \in \Delta(T_2,\ext_2)$ be the edge of $T_2$ that defines the adjacency between $\state{A}$ and $\state{C}$ in $F_2$. 
Let $e_{\state{A}}  \in \Delta(T_2,\ext_2)$ be the edge of $T_2$ that defines the adjacency between $\state{A}$ and its \emph{other} neighbouring component in $F_2$.   Define $e_{\state{C}}$ correspondingly for state $\state{C}$. These three edges are uniquely defined and have no endpoints in common. This is because of the assumption that
Observation \ref{obs:pendant} does not apply, the fact that $T_2$ is a binary tree, and the degree 2 restriction. See figure \ref{fig:deg2mutation} (top subfigure) for a schematic
depiction of the situation.

 Observe that, if $P$ is any simple path (in $T_2$) from a taxon in $X_{\state{A}}$ to a taxon in
$X_{\state{B}}$, then exactly one of the following two situations must hold: (1) $P$  traverses edge $e_{\state{A}}$; (2) $P$ traverses both edges $e_{\state{AC}}$ and $e_{\state{C}}$. This, again, is a consequence of
the degree 2 assumption. We will use this insight in due course.

As usual let $\chi'$ be the character obtained by relabeling $\state{A} := \state{B}$ within $\chi$. (We emphasize that $\mathcal{V}_{\state{A}}, \mathcal{V}_{\state{C}}, X_{\state{A}},X_{\state{B}}, X_{\state{C}}$ are defined \emph{before} the relabeling.) Assume, again for the sake of contradiction, that $\ell(T_2,\chr') \leq \ell(T_2,\chr) - 2$. 
Let $\ext'_2$ be a most parsimonious extension of $\chi'$ to $T_2$. We say that  $\ext'_2$ is \emph{left merging} if, in  $\ext'_2$, there is a simple path $P$ from some taxon in $X_{\state{A}}$ to some taxon in
$X_{\state{B}}$ such that all vertices on $P$ are allocated state $\state{B}$ by $\ext'_2$ and $P$ traverses edge $e_{\state{A}}$. We say that  $\ext'_2$ is \emph{right merging}
 if, in  $\ext'_2$, there is a simple path $P$ from some taxon in $X_{\state{A}}$ to some taxon in
$X_{\state{B}}$ such that all vertices on $P$ are allocated state $\state{B}$ by $\ext'_2$ and $P$ traverses both edges edge $e_{\state{AC}}$ and $e_{\state{C}}$. Note that  $\ext'_2$ might be left merging, right merging, both or neither. Depending on the exact combination, we use a different relabeling strategy.

\begin{figure}[H]
\centering
\pbox{\textwidth}{\includegraphics{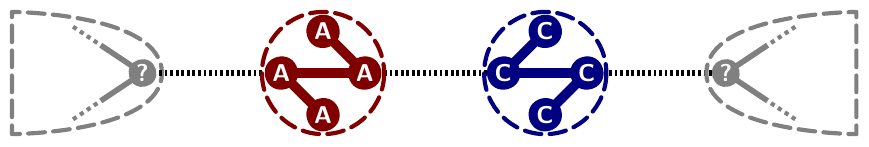}}\quad\parbox{0.3\textwidth}{\centering\small\itshape $\ext_2$ satisfies the \\ lemma requirements.}\\[2ex]
\pbox{\textwidth}{\includegraphics{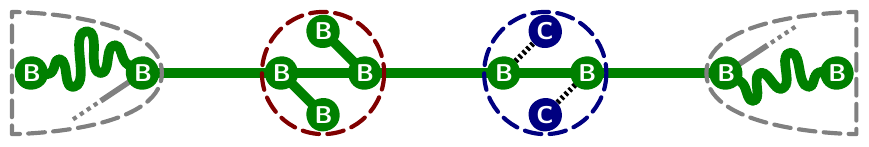}}\quad\parbox{0.3\textwidth}{\centering\small\itshape $\ext'_2$ is both left merging \\ and right merging.}
\caption{Top: the situation described in Lemma \ref{lem:neighbours}. Bottom: the
fourth case in the proof of that lemma.}
\label{fig:deg2mutation}
\end{figure}

The simplest is the case when  $\ext'_2$  is neither left merging nor right merging. In this case, consider the subgraph of $T_2$ induced
by vertices that are allocated state $\state{B}$ by $\ext'_2$. In general this subgraph might be disconnected. Delete all connected components of the subgraph that do not contain at least one taxon from $X_{\state{A}}$. Now, let $\mathcal{V}'$ be the vertices that remain. We create an extension $\exx$ of $\chr$ from $\ext'_2$ by relabeling all vertices in $\mathcal{V}'$ to state $\state{A}$, and leaving the other vertices untouched. (There is no danger that
a taxon in $X_{\state{B}}$ will be labeled with state $\state{A}$ because that would mean
$\ext'_2$ was left and/or right merging, which we exclude by assumption.) Given that
$X_{\state{A}}$ will by construction be a subset of $\mathcal{V}'$,  $\exx$ is indeed a valid extension of $\chr$. Moreover, $\Delta(T_2,\exx) = \Delta(T_2, \ext'_2)$. This is because, due to the fact that $\ext'_2$ is neither
left or right merging, the transformation of $\ext'_2$ into $\exx$ cannot create any new mutations. This then gives $\ell(T_2,\chi) \leq |\Delta(T_2,\exx)|= |\Delta(T_2,\ext'_2)|=\ell(T_2,\chi') \leq \ell(T_2,\chi)-2$, and we have our desired contradiction.

If $\ext'_2$ is left merging but not right merging, consider the subgraph of $T_2$ induced by vertices that are allocated state $\state{B}$ by $\ext'_2$. Delete edge $e_{\state{A}}$ from the subgraph. (It will definitely
be in the subgraph because $\ext'_2$ is left merging). Next delete all connected components of the subgraph that do not contain at least one taxon from $X_{\state{A}}$. As above, transform
$\ext'_2$ into $\exx$, an extension of $\chr$, by relabeling all the surviving vertices from \state{B} to \state{A}. The transformation can only increase the number of mutations by at most 1: on the edge $e_{\state{A}}$. Hence $\ell(T_2,\chi) \leq |\Delta(T_2,\exx)| \leq |\Delta(T_2,\ext'_2)|+1= \ell(T_2,\chi') + 1 \leq (\ell(T_2,\chi)-2)+1 = \ell(T_2,\chi)-1$, and we again have a contradiction.

If $\ext'_2$ is right merging but not left merging, we do exactly the same as in the previous paragraph, except that we delete $e_{\state{AC}}$ instead of $e_{\state{A}}$. This again yields the contradiction
$\ell(T_2,\chr) \leq \ell(T_2,\chr) - 1$.

The final, and most complicated case, is when $\ext'_2$ is both left merging and right merging (see figure \ref{fig:deg2mutation}, bottom subfigure). Here we convert $\ext'_2$ into $\exx$ as follows: all vertices in $\mathcal{V}_{\state{A}}$ are switched to state $\state{A}$,
and all vertices in $\mathcal{V}_{\state{C}}$ are switched to state $\state{C}$. This can create a new mutation on edge $e_{\state{A}}$. (The relabeling might cause some mutations inside $\mathcal{V}_{\state{A}}$ to disappear, which can only help us, but
for the sake of the proof we shall not assume this advantage exists). The relabeling can also create new mutations on $e_{\state{A}\state{C}}$ and $e_{\state{C}}$. However, these two mutations are compensated for
by the disappearance of at least two mutations inside $\mathcal{V}_{\state{C}}$.  The argument is as follows. Clearly, $\state{C} \neq \state{B}$ because $\state{C}$ is unique. The fact that $\ext'_2$ is right merging means
that (in $\ext'_2$) it is possible to walk along a simple path from some taxon in $X_{\state{A}}$ to some taxon in $X_{\state{B}}$, such that every vertex in the path has state $\state{B}$, and the path traverses
 $e_{\state{A}\state{C}}$ and $e_{\state{C}}$. Recall that $|X_{\state{C}}| \geq 2$ and $\state{C}$ was not ``pendant'' in  $\ext_2$ (due to the assumption that Observation \ref{obs:pendant} does not hold). Hence in $\ext'_2$ there are at least two mutations of the form $\state{B}-\state{C}$ on the set of edges whose endpoints are completely contained inside $\mathcal{V}_{\state{C}}$. It is precisely these mutations that disappear
when we completely relabel $\mathcal{V}_{\state{C}}$ to state $\state{C}$. Due to this compensation effect the total increase in the number of mutations when transforming $\ext'_2$ into $\exx$ is at most 1. This yields
the by now familiar conclusion $\ell(T_2, \chr) \leq \ell(T_2,\chr) - 1$, and thus a contradiction.
\end{proof}


\subsection{The bounding function}

In this final section we show that, whenever an optimal convex character exists with strictly more than  $7\dMP\left(T_1,T_2\right) - 5$ states, then a good pair of states will definitely exist, allowing us to reduce
the number of states in the character whilst preserving optimality and convexity. This will complete the proof of the \refbst.

In particular, we will show that at least one of the situations described in Lemma \ref{lem:neighbours}, Observation \ref{obs:pendant} and Observation \ref{obs:twounique} will hold. To begin we need an auxiliary lemma.

\begin{lem}
\label{lem:redblue}
 Let $T=(V,E)$ be a (not necessarily phylogenetic) tree in which $V$ is partitioned into a set $R$ of \emph{red} vertices and a set $B$ of \emph{blue} vertices
and all leaves of $T$ are red. If $|B| \geq 3|R|-4$, then there exist two adjacent vertices $u_1 \neq u_2$ both
of which are blue and of degree 2. 
\end{lem}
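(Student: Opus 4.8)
The plan is to argue by contraposition: assuming that $T$ contains no pair of adjacent degree-2 blue vertices, I will show that $|B| \leq 3|R| - 5$, contradicting the hypothesis $|B| \geq 3|R|-4$. Throughout write $r = |R|$ and $b = |B|$. Since all leaves of $T$ are red, every blue vertex has degree at least $2$, so the blue vertices split into those of degree exactly $2$ and those of degree at least $3$; denote their counts by $b_2$ and $b_{\geq 3}$, so that $b = b_2 + b_{\geq 3}$.

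The first step is to bound $b_{\geq 3}$, and this is precisely where the ``all leaves are red'' hypothesis enters. In any tree the identity $\sum_{v}(\deg(v)-2) = 2|E| - 2|V| = -2$ holds; isolating the leaves (each contributing $-1$) and the degree-$2$ vertices (contributing $0$) gives $\sum_{\deg(v)\geq 3}(\deg(v)-2) = L - 2$, where $L$ is the number of leaves. Since each degree-$\geq 3$ vertex contributes at least $1$ to the left-hand side, the number of vertices of degree at least $3$ is at most $L-2$. As all leaves are red we have $L \leq r$, and since a blue degree-$\geq 3$ vertex is in particular a degree-$\geq 3$ vertex, this yields $b_{\geq 3} \leq L - 2 \leq r - 2$.

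The second, and more delicate, step bounds $b_2$ by a suppression argument. Let $S$ be the set of all vertices of $T$ that are \emph{not} blue vertices of degree $2$; since there are no blue leaves, $V$ is the disjoint union of $R$, the blue degree-$2$ vertices, and the blue degree-$\geq 3$ vertices, so $|S| = r + b_{\geq 3}$. Under the contradiction hypothesis neither neighbour of a blue degree-$2$ vertex is itself a blue degree-$2$ vertex, so both of its neighbours lie in $S$. Hence suppressing all blue degree-$2$ vertices -- replacing each by a single edge joining its two $S$-neighbours -- produces a tree $T'$ on vertex set $S$, which therefore has exactly $|S| - 1 = r + b_{\geq 3} - 1$ edges. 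The crucial observation is that, because no two blue degree-$2$ vertices are adjacent, each edge of $T'$ absorbs at most one suppressed vertex, so distinct blue degree-$2$ vertices give distinct edges of $T'$; consequently $b_2 \leq r + b_{\geq 3} - 1$.

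Combining the two bounds finishes the argument: $b = b_2 + b_{\geq 3} \leq (r + b_{\geq 3} - 1) + b_{\geq 3} = r + 2b_{\geq 3} - 1 \leq r + 2(r-2) - 1 = 3r - 5$, the desired contradiction. I expect the main obstacle to be formalising the suppression step cleanly, in particular verifying that the map sending each suppressed vertex to ``its'' edge of $T'$ is injective, since this is exactly where the adjacency hypothesis is consumed. One should also sanity-check the degenerate configurations (for instance $T$ a single path, where $b_{\geq 3}=0$ and the bound reads $b_2 \leq r - 1$) to confirm the inequalities remain valid, and it is worth noting that the example of a blue degree-$3$ centre with three pendant paths of the form blue--red shows $|B| = 3|R|-5$ is attainable, so the threshold $3|R|-4$ is sharp.
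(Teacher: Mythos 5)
Your proof is correct, and it reaches the tight bound $|B|\leq 3|R|-5$ by a genuinely different accounting than the paper. The paper first \emph{transforms} the putative counterexample: every internal red vertex gets a new pendant red leaf and is itself recoloured blue, which preserves $|R|$, increases $|B|$, and creates no adjacent blue degree-2 pair; after this normalisation the red vertices are exactly the leaves, and the bound falls out of the standard facts that a tree with $|R'|$ leaves and no degree-2 vertices has at most $|R'|-2$ internal vertices and $2|R'|-3$ edges, each edge being subdivided at most once. You instead leave the tree alone and split $B$ by degree: the degree-sum identity $\sum_v(\deg(v)-2)=-2$ gives $b_{\geq 3}\leq L-2\leq r-2$ (this is where ``all leaves are red'' is consumed), and the injection of blue degree-2 vertices into edges of the suppressed tree on $S$ gives $b_2\leq |S|-1=r+b_{\geq 3}-1$ (this is where the no-adjacent-pair hypothesis is consumed); adding these yields $b\leq 3r-5$. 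The two arguments are cousins --- both ultimately suppress degree-2 vertices and count edges --- but yours avoids the modification step and makes explicit exactly which hypothesis pays for which part of the bound, while the paper's normalisation buys a shorter final computation at the cost of having to verify that the transformation preserves counterexample-hood. Your sharpness example (a blue degree-3 centre with three blue--red pendant paths) is a nice addition the paper does not give. Both proofs silently ignore the degenerate cases $|V|\leq 1$ (where the statement is vacuous or ill-posed), so no complaint there; your injectivity claim for the suppression map is justified since two distinct suppressed vertices mapping to the same edge of $T'$ would create a cycle in $T$.
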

\begin{proof}
Suppose for the sake of contradiction that this is not true. Let $T$ be a counter-example: all its leaves are red, and  $|B| \geq 3|R|-4$, but the two vertices with the described property (henceforth called a ``$(u_1, u_2)$  \emph{pair}'') do not exist. Now, suppose $T$ has an internal vertex $v$ that is red. We introduce a new vertex
$v'$, attach it by an edge to $v$, colour $v'$ red and colour $v$ blue. This increases the number
of blue vertices by one and preserves the number of red vertices. Moreover, due to the fact that $v$ now has degree at least 3, this operation cannot cause a $u_1, u_2$ pair to arise. Hence,
this new tree is also a counterexample. We repeat this until we obtain a tree $T'$
whose leaves are all red and whose internal vertices are all blue. Let $R'$ and $B'$ be the set of red and blue vertices of $T'$.
By the previous argument, $|B'| \geq 3|R'|-4$. Now, if one suppresses all vertices in $T'$ of degree 2,
we obtain a tree $T''$ on $|R'|$ leaves with at most $|R'|-2$ internal vertices and at most $2|R'|-3$ edges (note that these values correspond to the binary case). We can obtain $T'$ from $T''$  by subdividing each edge of $T''$ at most once. Hence,
\begin{align*}
|B'| &\leq |R'|-2 + (2|R'| - 3)\\
& = 3|R'|-5
\end{align*}
and this yields a contradiction.
\end{proof}
Now, let $\chr, \ext_1, \ext_2, F_1, F_2, G(F_2)$ be defined as at the beginning of the previous section, and let $\chr$ use strictly more than $7\dMP - 5$ (i.e. at least $7\dMP - 4$)  states where here we write
$\dMP$ as short for $\dMP\left(T_1,T_2\right)$.  If Observation \ref{obs:pendant} or Observation \ref{obs:twounique} holds then we are done. Otherwise, consider the following: $T_1$ is convex so achieves a parsimony score exactly equal to $|\chr|-1$. $T_2$ achieves a parsimony score exactly equal to $|\chr| - 1 + \dMP$, so the homoplasy score $h$ of $T_2$ is exactly $\dMP$. Then, by
Lemma \ref{lem:states} (1st inequality) there are at least $|\chr| - \dMP \geq 6 \dMP - 4$ unique states and at most $2 \dMP$ (4th inequality) repeating components (in $F_2$). We know that, because 
Observation \ref{obs:pendant} does not hold, none of the leaves of $G(F_2)$ are unique states. In particular, all the leaves of $G(F_2)$ are repeating components. Now, if we view repeating components as ``red''
vertices in Lemma \ref{lem:redblue} and unique states as ``blue'', we need $ 6 \dMP - 4 \geq 3( 2\dMP ) - 4$ to be able to use Lemma \ref{lem:redblue}.  This holds, so we are done: in particular, Lemma \ref{lem:redblue}
shows the existence of a good pair via the situation described in Lemma \ref{lem:neighbours}.

\section{Discussion}


The bound $7\dMP - 5$ is sharp for the case $\dMP = 1$: clearly
at least 2 states are needed to achieve a distance of 1 or more.
For $\dMP \geq 2$ there is probably  room to improve the bound, and this is an interesting direction for future research. For $\dMP = 2$ a slight generalization of the arguments used in the proof of Lemma \ref{lem:neighbours}, combined with an ad-hoc case analysis can be used to easily reduce the bound from 9 to 7.  Increasingly complex arguments can be utilized to reduce this further: we conjecture that 3 states are actually sufficient when $\dMP=2$.  These arguments do not easily lead to any significant improvement in the general $7 \dMP - 5$ bound and are not included here. However, they raise the intriguing (although somewhat speculative) question of whether $\dMP + 1$ states are
always sufficient; the example given later in this section shows that they are sometimes necessary.

From an algorithmic perspective the bound has the following implications. If
$k$ is a verified upper bound on $\dMP$, then we can guarantee to find
an optimal (convex) character achieving $\dMP$ simply by guessing which of $T_1$ and
$T_2$ is convex and then looping through all at most
\[
\sum_{i=2}^{7k-5} \binom{2|X|-3}{i-1}
\]
convex characters with at most $7k-5$ states. This is because a convex character with $k$ states corresponds to a size $(k-1)$ subset of the edges in the convex tree, and an unrooted tree on $|X|$ taxa has at most $2|X|-3$ edges. Clearly, for constant $k$ this yields a running time polynomial in $|X|$. (Prior to the {\refbst} a constant upper bound of $k$ states yielded only
running times of the form $O( k^{|X|} )$: there are many more non-convex than convex characters on $k$ states.) However, the bound does \emph{not} automatically mean that questions such as ``Is $\dMP \leq t$?'' or ``Is $\dMP \geq t$?'' can be answered in polynomial
time for fixed, constant $t$. This is because in its current form the {\refbst} only holds for
\emph{optimal} characters: if we apply it to suboptimal characters we can still decrease the number of states by merging good pairs of states, but the parsimony distance achieved by the new character might \emph{increase} compared to the old character. Expressed differently, the
danger exists that for some values $d < \dMP$, all convex characters achieving parsimony distance \emph{exactly} $d$ will have a huge number of states. This means that the
obvious algorithmic stategy, of looping through all convex characters with an increasing number
of states, does not have a clear stopping strategy, even for $t$ fixed.


Finally, we remark that optimal non-convex characters might have strictly fewer states than optimal convex characters. In the proof of Lemma 3.7 of \cite{dMP-fischer} the following two trees are shown which have $\dMP = 2$:
\begin{align*}
(((((((1,2),3),4),5),6),7),8);\\
(((1,3),(2,4)),((5,7),(6,8)));
\end{align*}
(The fact that $\dMP=2$ is not proven there, but it can be easily verified computationally). The proof there shows that 2 states are sufficient to achieve this maximum if non-convex characters are allowed, but 3 if we restrict to convex characters. It is natural to ask how far apart, in general, the minimum number of required states can be. 

\bibliographystyle{unsrturl}
\bibliography{convexbound}

\end{document}